\newcommand{\eBox}{$\hfill\square$}
\newtheorem{defn}{Definition}
\newtheorem{prop}[defn]{Proposition}
\newtheorem{rem}[defn]{Remark}
\newtheorem{lemma}[defn]{Lemma}
\newtheorem{theorem}[defn]{Theorem}
\begin{document}

\title{A Dissipativity Characterization of Velocity Turnpikes in Optimal Control Problems for Mechanical Systems
}

\author{Timm Faulwasser\footnote{T.~Faulwasser,
			Department of Electrical Engineering and Information Technology, TU Dortmund University, Germany, e-mail: timm.faulwasser@ieee.org}
\and
Kathrin Fla\ss{}kamp\footnote{K. Fla{\ss}kamp, 
              Modeling and Simulation of Technical Systems, Systems Engineering, Saarland University, Germany, e-mail: kathrin.flasskamp@uni-saarland.de}
\and
Sina Ober-Bl\"{o}baum\footnote{S. Ober-Bl\"obaum,
             Department of Engineering Science,
             University of Oxford, United Kingdom,
             email: sina.ober-blobaum@eng.ox.ac.uk}
\and
Karl Worthmann\footnote{K. Worthmann,
          Institut f\"{u}r Mathematik,
         Technische Universit\"{a}t Ilmenau, Germany,
        email: karl.worthmann@tu-ilmenau.de}}

\maketitle

\begin{abstract}                
Turnpikes have recently gained significant research interest in optimal control, since they allow for pivotal insights into the structure of solutions to optimal control problems.
So far, mainly steady state solutions which serve as optimal operation points, are studied.
This is in contrast to time-varying turnpikes, which are in the focus of this paper.
More concretely, we analyze symmetry-induced velocity turnpikes, i.e.\ controlled relative equilibria, called trim primitives, which are optimal operation points regarding the given cost criterion.
We characterize velocity turnpikes by means of dissipativity inequalities.
Moreover, we study the equivalence between optimal control problems and steady-state problems via the corresponding necessary optimality conditions.
An academic example is given for illustration.
\end{abstract}



\section{Introduction}
Optimal control concepts are of key interest in the planning and computation of reference motions for mechanical systems. At the same time the inherent structure of mechanical systems implies specific properties.
In the classical work of \cite{Dubin1957}, later extended by \cite{Reeds90}, the trajectory planning problem for a car is solved geometrically, i.e.\ by concatenating straight lines and arcs of circles.
This approach shows two key points: (a) the existence of motions in mechanical systems of particularly simple shape (lines and arcs of circles), and (b) their concatenation to entire solution trajectories.
Conceptually, this has been formalized in \cite{FrDaFe05} by defining motion primitives as building blocks of trajectories and a proposed graph-based planning procedure to obtain sequences.
Among the building blocks, trim primitives, which are generated by the inherent system symmetry, are of particular interest.
Motion planning via trim primitives has gained recent interest in the trajectory design for autonomous driving \cite{Frazzoli2016, Kobilarov2017}.

Optimization is used in the planning procedure of \cite{FrDaFe05} and of the works based on the approach, e.g.~\cite{Ko08,FOK12}. However, one central question has not been addressed, yet. Namely, when is it optimal for the mechanical system to move on trim primitives?
In this paper, we address this question leveraging turnpike theory.

 Turnpikes are a classical concept in optimal control approaches in economics. While first observations can be traced back to \cite{vonNeumann38}, the notion as such has been coined by \cite{Dorfman58}, see also \cite{Mckenzie76,Carlson91}. In essence, the turnpikes phenomenon is a similarity property of optimal control problems parametric in the initial condition and the horizon length, i.e. for varying initial conditions and horizon length the time the optimal lifts spend close to a specific steady state grows with increasing horizon. In its easiest form, the turnpike is a the steady-state of the optimality system \cite{Trelat15a,kit:zanon18a}, while there have also been extensions to time-varying cases \cite{Gruene18b}. 

By now, it is well understood  that a dissipativity notion of Optimal Control Problems (OCPs), which was originally developed in context of so-called economic MPC---see \cite{kit:faulwasser18c} for a recent overview---plays a key role in analyzing turnpike properties, see \cite{epfl:faulwasser15h,Gruene16a}. Moreover there also exists a close relation between dissipativity, stability and reachability in infinite-horizon OCPs \cite{tudo:faulwasser20a}.

The present paper considers a specific class of OCPs arising for mechanical systems. We investigate the link between the concept of velocity turnpikes, which we recently proposed in \cite{kit:faulwasser19b_2}, and dissipativity properties of the underlying OCP. The core challenge of velocity turnpikes is that in contrast to the classical steady-state concept, the turnpike corresponds to a partial steady state where positions are required to be stationary. Specifically, we show that a suitable dissipativity notion of OCPs allows certifying velocity turnpikes and we characterize the reduced dynamics of the optimality system, which correspond to the velocity turnpike. 

The remainder of this paper is organized as follows: Since we bring together concepts from two fields of research --mechanical systems with symmetries and turnpike theory in optimal control-- we give basic definitions in Section~\ref{sec:preliminaries}. In Section~\ref{sec:VelTurnpikes} we introduce velocity turnpikes and show their dissipativity properties. Then, we focus on the adjoints and give the relation between the OCP and a velocity steady state problem in Section~\ref{sec:Adjoints}. An illustrative example is shown in Section~\ref{sec:Example}, before we close by giving an outlook to possible generalizations of our finding in future work in Section~\ref{sec:Conclusion}.

\section{Preliminaries}\label{sec:preliminaries}

\subsection{Mechanics and Symmetry}\label{sec:MechSymmetry}

The dynamics of mechanical systems are often given by Euler-Lagrange equations
\begin{align}\label{NotationSystemDynamics}
		\frac{\mathrm{d}}{\mathrm{d}t} \frac{\partial L}{\partial \dot{q}} - \frac{\partial L}{\partial q} & = f_L(q,\dot{q},u)
\end{align}
with real-valued Lagrangian $L$ and mechanical forces $f_L$.
Let $Q$ denote the $\frac{n}{2}$-dimensional smooth manifold ($\frac{n}{2}\in\mathbb{N}$) of configurations $q$, such that the
tangent bundle $TQ$ forms the $n$-dimensional state space.
The external controls are denoted by $u \in \mathbb{R}^m$. 
Assuming regularity of the Lagrangian, the second-order Euler-Lagrange equations can be reformulated as a system of first-order Ordinary Differential Equations (ODEs) in the form 
\[\dot{x} = f(x,u)\]
where $x = (q,\dot{q}) = (q,{v}) \in T_q Q$ denotes the full state, which is contained in the tangent space at~$q$. %
Then, the solution~$x(t) = \phi_u(t;x_0)$ to the Euler-Lagrange Eq.~\eqref{NotationSystemDynamics} for initial condition $x_0$ and $u \in L^\infty([0,T],\mathbb{R}^m)$ is given by the forced Lagrangian flow $\phi_u: [0,T] \times TQ \to TQ$.

In this paper, we consider mechanical systems which possess Lie group symmetries.
In general, a Lie group is a group $(\mathcal{G},\circ)$, which is also a smooth manifold, for which the group operations $(g,h) \mapsto g \circ h$ and $g \mapsto g^{-1}$ are smooth. If, in addition, a smooth manifold~$M$ is given, we call a map $\Psi: \mathcal{G} \times M \rightarrow M$ a left-action of $\mathcal{G}$ on $M$ if and only if the following properties hold:
\begin{itemize}
	\item $\Psi(e,x) = x$ for all $x \in M$ where $e$ denotes the neutral element of $(\mathcal{G},\circ)$,
	\item $\Psi(g,\Psi(h,x)) = \Psi(g \circ h,x)$ for all $g, h \in \mathcal{G}$ and $x \in M$.
\end{itemize}

\begin{defn}[Symmetry Group]\label{def:symmetryGroup}
	Let the configuration manifold $Q$ be a smooth manifold, $(\mathcal{G},\circ)$ a Lie-group, and $\Psi$ a left-action of $\mathcal{G}$ on $Q$. Further, let $\Psi^{TQ}: \mathcal{G} \times TQ \to TQ$ be the lift of $\Psi$ to $TQ$. Then, we call the triple $(\mathcal{G},Q,\Psi^{TQ})$ a \emph{symmetry group} of the system \eqref{NotationSystemDynamics} if the property
\begin{align}\label{NotationCommutativityFlowSymmetry}
	\phi_u(t;\Psi^{TQ}(g,x_0)) = \Psi^{TQ}(g,\phi_u(t;x_0)) \quad\forall\, t \in [0,T]
\end{align}
holds for all $(g,x_0,u) \in \mathcal{G} \times TQ \times L^\infty([0,T],\mathbb{R}^m)$.
\eBox
\end{defn}
Given a mechanical system with symmetry group, trajectories which are equivalent w.r.t.\ the symmetry action can be identified. A \emph{motion primitive} denotes the equivalence class of all equivalent trajectories for a fixed $g\in\mathcal{G}$ and given control signal.

Moreover, the symmetry may lead to the existence of 
special trajectories, which we call \emph{trim primitives} (\emph{trims} for short).
\begin{defn}[Trim Primitive] \label{def:Trims}
Let $(\mathcal{G},Q,\Psi^{TQ})$ be a symmetry group in the sense of 
	Definition~\ref{def:symmetryGroup}.
	Then, a trajectory $\phi_u(\cdot;x_0)$, $u(t) \equiv \bar{u} = \text{const.}$, is called a \emph{(trim) primitive} if there exists a Lie algebra element $\xi \in\mathfrak{g}$ such that
	\begin{equation}\label{eq:defTrim} 
		\phi_u(t;x_0) = \Psi^{TQ}(\exp(\xi t),x_0) \quad\forall\, t \geq 0.  
	\end{equation}
	\eBox
\end{defn}
For a formal definition of Lie algebras we refer to~\cite{Bake12}.

In this paper, we will focus on mechanical systems for which the Lagrangian and the mechanical forces are configuration independent.
That is, we consider mechanical systems of the particular form
		\begin{equation}\label{NotationSecondOrderSystem}							\begin{aligned}
			\dot{q}(t) & = v(t) \\
			\dot{v}(t) & = f(v(t),u(t))
			\end{aligned}
		\end{equation}
Thus, the system is independent, i.e.\ symmetric w.r.t.\ translations in all configuration variables $q$.
The corresponding Lie group $\mathcal{G}$ is identical to the full configuration manifold and operates via vector addition, i.e.\ $\Psi(g,q) = q+g$ and $\Psi^{TQ}(g,x) =(q+g,v)$ . 
\begin{lemma} \label{lem:TrimsSimple}
Given a mechanical system of type \eqref{NotationSecondOrderSystem}, a trim can be characterized by the pair $(\bar{v},\bar{u})^\top$ satisfying the condition 
\begin{equation} \label{eq:TrimCondition}
f(\bar{v},\bar{u}) = 0.
\end{equation}
\end{lemma}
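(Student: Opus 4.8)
The plan is to unravel Definition~\ref{def:Trims} for the specific translation symmetry attached to system~\eqref{NotationSecondOrderSystem} and to match the resulting ansatz against the system dynamics. The first step is to compute the exponential map of the relevant Lie group. Since $(\mathcal{G},\circ)$ is the configuration manifold acting by vector addition, it is an abelian (additive) group whose Lie algebra $\mathfrak{g}$ is again a copy of the configuration space and whose exponential map is the identity; hence the one-parameter subgroup generated by $\xi\in\mathfrak{g}$ is $t\mapsto\exp(\xi t)=\xi t$. Applying the lifted action to $x_0=(q_0,v_0)$ then gives the explicit form $\Psi^{TQ}(\exp(\xi t),x_0)=(q_0+\xi t,\,v_0)$, so that the trim condition~\eqref{eq:defTrim} is equivalent to the trajectory having the shape $q(t)=q_0+\xi t$ with constant velocity component $v(t)\equiv v_0$.

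Next I would treat the two implications separately. For necessity, assume $\phi_u(\cdot;x_0)$ with $u\equiv\bar u$ is a trim, so that $\phi_u(t;x_0)=(q_0+\xi t,\,v_0)$ for all $t\geq 0$. Differentiating the configuration part yields $\dot q(t)=\xi$, while the velocity part is constant, $v(t)\equiv v_0$. Substituting into the dynamics~\eqref{NotationSecondOrderSystem}, the first equation $\dot q=v$ forces $\xi=v_0=:\bar v$, and the second equation $\dot v=f(v,\bar u)$ together with $\dot v\equiv 0$ yields $f(\bar v,\bar u)=0$, which is precisely the claimed condition~\eqref{eq:TrimCondition}.

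For sufficiency, suppose a pair $(\bar v,\bar u)^\top$ satisfies~\eqref{eq:TrimCondition}. Starting from an arbitrary $x_0=(q_0,\bar v)$ with the constant control $u\equiv\bar u$, uniqueness of solutions to the ODE~\eqref{NotationSecondOrderSystem} shows that $v(t)\equiv\bar v$, since $\dot v=f(\bar v,\bar u)=0$, and consequently $q(t)=q_0+\bar v\,t$. Choosing the generator $\xi=\bar v\in\mathfrak{g}$, this trajectory coincides with $\Psi^{TQ}(\exp(\xi t),x_0)$, so it is a trim in the sense of Definition~\ref{def:Trims}, completing the characterization.

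The only genuinely delicate point, rather than a real obstacle, is the correct identification of the exponential map of the additive symmetry group and the observation that its lift leaves the velocity component untouched; once this is settled, both directions reduce to direct substitution into the dynamics. It is worth emphasizing as part of the argument that the Lie algebra element $\xi$ is not a free parameter but is pinned down to equal the constant trim velocity $\bar v$, which is what ties the group-theoretic formulation of~\eqref{eq:defTrim} to the purely algebraic condition~\eqref{eq:TrimCondition}.
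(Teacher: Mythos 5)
Your proposal is correct and follows essentially the same route as the paper: identify the exponential of the additive symmetry group, note that the lifted action $\Psi^{TQ}(\exp(\xi t),x_0)=(q_0+\xi t,\,v_0)$ leaves the velocity fixed, and pin down $\xi=\bar v$ by matching against the dynamics \eqref{NotationSecondOrderSystem}. You are in fact slightly more thorough than the paper, which only verifies the sufficiency direction explicitly (stating the constant-velocity solution and identifying $\xi=\bar v$), whereas you also spell out the necessity direction by differentiating the trim ansatz --- a worthwhile addition since the lemma claims a characterization.
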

\begin{proof}	Let $(q_0,\bar{v})$ denote the initial value.
The corresponding solution for control $u(t)\equiv \bar{u}$ is  $q(t) = q_0 + \bar{v}t$ and $v(t) = v_0 = \bar{v}$. This can also be expressed via
		\begin{align} \label{eq:TrimSimple}
			\Psi^{TQ} \left(\exp(\xi t), \begin{pmatrix}
				q_0 \\ v_0
			\end{pmatrix} \right) & 
			= \begin{pmatrix}
				q_0 + \xi t \\ v_0
			\end{pmatrix} 
		\end{align}
with $\xi = \bar{v}$ according to Definition~\ref{def:Trims}.
\end{proof}

\subsection{Turnpikes in Optimal Control}
Let the stage cost $\ell: \mathbb{R}^{n} \times \mathbb{R}^{m} \to \mathbb{R}$ be  continuous and convex and let the closed sets $\mathbb{U} \subseteq \mathbb{R}^m$ and $\mathbb{X} \subseteq \mathbb{R}^n$ be given.
A general OCP is given as
\begin{align}
	\underset{u \in L^\infty([0,T],\mathbb{R}^{m})}{\text{minimize}} \quad & \int_{0}^{T} \ell(x(t),u(t))\,\mathrm{d}t \nonumber \\
	\text{subject to}& \label{eq:OCPclassic}\\
			\dot{{x}}(t) & = f({x}(t),{u}(t))\quad \forall\, t \in [0,T]\nonumber\\
		 		 {x}(0) &= {x}_0 \text{ and } {x}(T) = {x}_{T} \nonumber\\
		 u(t) &\in \mathbb{U} \text{ and } x(t) \in \mathbb{X} \quad \forall\, t \in [0,T] \nonumber
\end{align}
where the last three conditions refer to the system dynamics, the boundary conditions, and the control and state constraints.

%
%
%
\begin{defn}\label{def:controlledEqClassic}
A state~$x \in \mathbb{X}$ is called \emph{(controlled) equilibrium} if there exists $u \in \mathbb{R}^m$ such that $f(x,u) = 0$ holds. %
Based on this terminology, 	the pair $(x,u) \in \mathbb{R}^{n} \times \mathbb{R}^{m}$ is called an \emph{optimal steady state} if it holds that
\begin{align}\label{eq:SOPclassic}
	(x,u) = \operatorname{argmin} \{ \ell(x,u) | (x,u) \in \mathbb{X} \times\mathbb{U}, f(x,u) = 0 \}.
\end{align}
\eBox
\end{defn}

\begin{figure}[t]
\begin{center}
\includegraphics[scale=0.4]{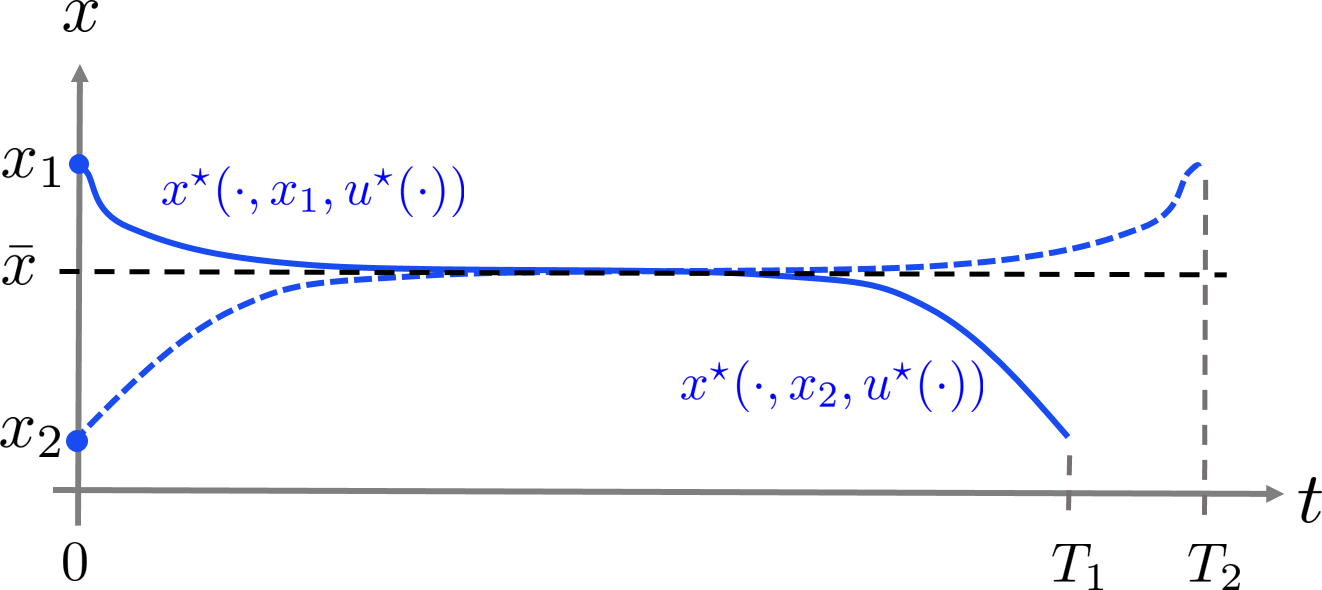}
\end{center}
\caption{Sketch of a classical steady-state turnpike \label{fig:turnpike}}
\end{figure}

Classically, \emph{turnpikes} are optimal steady states, i.e. solutions to \eqref{eq:SOPclassic}, see \cite{Mckenzie76,Carlson91}. As sketched in Figure \ref{fig:turnpike}, for different initial conditions $x_1, x_2$ and varying horizons $T_1, T_2$ the optimal solutions spend an increasing amount of time close to the turnpike $\bar x$, which turns out to be an optimal steady state. 
Only if the horizon is too short, it may be too costly to approach the respective steady state and thus, the turnpike phenomenon vanishes. We remark without further elaboration that there exist varying definitions of turnpike properties, see \cite{Stieler14a,Trelat15a} for so-called \textit{exponential} turnpikes, \cite{Gugat16} for \emph{integral} turnpikes, and \cite{epfl:faulwasser15h} for measure turnpikes. 
Turnpikes are also closely related to dissipativity properties of OCPs \cite{Gruene16a,epfl:faulwasser15h} and to stability properties of infinite-horizon OCPs \cite{tudo:faulwasser20a}. 


\section{Velocity Turnpikes and Dissipativity}\label{sec:VelTurnpikes}
We will now extend the concept of turnpikes to mechanical systems with symmetries. To this end, consider a mechanical system with invariances as defined in \eqref{NotationSecondOrderSystem}. In particular, the set of admissible states $x=(q,v)$  is $\mathbb{X} \subseteq TQ$, a subset of the tangent bundle.
We consider the OCP
\begin{align}
	\underset{u \in L^\infty([0,T],\mathbb{R}^{m})}{\text{minimize}} \quad & \int_{0}^{T} \ell(v(t),u(t))\,\mathrm{d}t \nonumber \\
	\text{subject to}& \label{eq:OCP}\\
		\dot{{q}}(t) & = {v}(t) \nonumber\\
			\dot{{v}}(t) & = f({v}(t),{u}(t))\quad \forall\, t \in [0,T]\nonumber\\
		 		 {q}(0) &= {q}_0,\, v(0)=v_0 \text{ and } {q}(T) = {q}_{T},\, v(t)=v_T, \nonumber\\
		 u(t) &\in \mathbb{U} \text{ and } ({q}(t),v(t)) \in \mathbb{X} \quad \forall\, t \in [0,T]. \nonumber
\end{align}%
Note that we now also assume the stage cost $\ell$ to be independent w.r.t.\ $q$.

For a controlled equilibrium, we necessarily have $v\equiv 0$ and, thus, $u$ such that $f(0,u)=0$ holds.
In the following, we are also interested in zeros of $f$ with non-zero velocity, i.e.~in trims (cf.~Definition~\ref{def:Trims}).

For the system class defined in Eq.~\eqref{NotationSecondOrderSystem}, a trim corresponds to an equilibrium relative to the dynamics in $v$, but not to the dynamics in $q$.
Thus, it has been introduced as a \emph{velocity steady state} in \cite{kit:faulwasser19b_2}.


\begin{defn}

	Let $(v,u)$ be a trim as characterized in Lemma~\ref{lem:TrimsSimple}. %
	The pair $(v,u)$ is called an \emph{optimal velocity steady state} if it holds that
	\begin{align}\label{eq:SOP}
		(v,u) = \operatorname{argmin}  \{ \ell(v,u) \,  |  \,(v,u) \in \mathbb{X}_V\times\mathbb{U}, f(v,u) = 0 \}.
	\end{align}
	where $\mathbb{X}_V := \{ v \, | \, \exists\ x: (x,v) \in \mathbb{X} \}$ is the projection of $\mathbb{X}$ on the $v$-component.
\eBox
\end{defn}

Note that in contrast to the classical definition of an optimal steady state (Definition~\ref{def:controlledEqClassic}), an optimal velocity steady state does not define the full state vector, but only the $v$-component.
We decide not to fix the initial configuration $q_0$ of the corresponding trim, since any other configuration $\tilde{q} = q_0 + v\cdot t$ for some $t \in \mathbb{R}$ would define the same trim.
This is due to the symmetry equivalence (cf. Section~\ref{sec:MechSymmetry}).


Next we recall a definition of a velocity turnpike property, where the turnpike as such is a trim, see~\cite{kit:faulwasser19b_2}. %
Similarly to \cite{Carlson91, epfl:faulwasser15h} consider 
\begin{equation} \label{eq:Theta}
\Theta_{T}(\varepsilon) = \left\{t \in [0,T]: \left\| (v^\star(t), u^\star(t)) - (\bar v, \bar u)\right\| > \varepsilon\right\},
\end{equation}
which is the set of time points for which the optimal velocity and input trajectory pairs is not inside an $\varepsilon$-ball of the steady-state pair $(\bar v, \bar u)$. Now we are ready to define a measure-based velocity turnpike property similar to \cite{epfl:faulwasser15h}. 
\begin{defn}[Velocity turnpike property]\label{def:turnpike}~\\ 
	The optimal solutions $(q^\star(\cdot), v^\star(\cdot), u^\star(\cdot))$ are said to have a \textnormal{velocity turnpike} with respect to $(\bar v, \bar u)$ %
	if there exists a function $\nu:[0,\infty)\to [0,\infty]$ such that, for all $(q_0, v_0) \in \mathbb{X}_0 \subseteq \mathbb{X}$ and all $T>0$, we have
\begin{equation}\label{eq:TP}
\mu\left[\Theta_{T}(\varepsilon)\right]  <\nu(\varepsilon)< \infty \quad \forall\: \varepsilon >0,
\end{equation}
where $\mu[\cdot]$ is the Lebesgue measure on the real line.\\
The  optimal solutions $(q^\star(\cdot), v^\star(\cdot), u^\star(\cdot))$ are said to have an \textnormal{exact velocity turnpike} if Condition \eqref{eq:TP} also holds for $\varepsilon = 0$, i.e., 
\begin{equation} \label{eq:exactTP}
\mu\left[\Theta_{T}(0)\right]  <  \nu(0) < \infty. ~ 
\end{equation}\eBox
\end{defn}
 %

 %
%
Next, we adopt the definition of dissipativity with respect to a steady state \cite{Angeli12a} for our setting. We refer to \cite{Willems07a, Moylan14a} for further details on dissipativity. 
Let $w: \mathbb{X}\times\mathbb{U}\to \mathbb{R}$  be given by
\begin{equation} \label{eq:def_w}
w(v,u) := \ell(v,u) - \ell(\bar v,\bar u),
\end{equation}
where $\ell$ is the stage cost in the OCP~\eqref{eq:OCP}. 
\begin{defn}[Dissipativity w.r.t.\ a velocity steady state]\label{def:diss} 
OCP \eqref{eq:OCP} is said to be \textit{dissipative with respect to $ (\bar v, \bar u)^\top $} if 
there exists a non-negative storage function\footnote{Note that the required properties of $S$ differ in different works: in \cite{epfl:faulwasser15h, Mueller14a} boundedness is assumed, while in \cite{Angeli12a} the storage $S$ can take real values instead of non-negative real values.} $S:\mathbb{X} \to \mathbb{R}^+_0$ such that for all $(q_0, v_0) \in \mathbb{X}_0 \subseteq \mathbb{X}$, all $T\geq 0$ and all optimal input $u^\star(\cdot) \in L^\infty([0,T],\mathbb{U})$ we have
\begin{subequations} \label{eq:DI}
	\begin{equation}\label{eq:diss}
		S(q_T, v_T) - S(q_0, v_0) \leq \int_{0}^{T} w(v^\star(t), u^\star(t))\,\mathrm{d}t,
	\end{equation}
	where $(q_T, v_T) = (q^\star(T, u^\star(\cdot)), v^\star(T, u^\star(\cdot)))$. 
	If, in addition, there exists a continuous, strictly increasing function $\alpha: [0,\infty) \rightarrow \mathbb{R}$ with $\alpha(0) = 0$ satisfying
	\begin{multline} \label{eq:str_diss}
		S(q_T, v_T) - S(q_0, v_0) \leq \int_{0}^{T} -\alpha\left(\left\|(v^\star(t), u^\star(t))-(\bar v, \bar u)\right\|\right) + w(v^\star(t), u^\star(t))\,\mathrm{d}t
	\end{multline}
\end{subequations}
then, OCP \eqref{eq:OCP} is said to be \textnormal{strictly dissipative with respect to $(\bar v, \bar u)^\top $}.
 \eBox
\end{defn}
\begin{lemma}[Optimality of velocity steady state]
Let system \eqref{NotationSecondOrderSystem} be strictly dissipative with respect to $(\bar v, \bar u)^\top $, then it is the unique globally optimal minimizer in
\begin{align}
\min_{u\in\mathbb{U},v\in\mathbb{X}_V} \ell(v,u) \quad  \text{s.t. } f(v,u) = 0. \label{eq:SOPuv}
 \end{align}
 \eBox
 \end{lemma}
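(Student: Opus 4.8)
The plan is to establish both global optimality and uniqueness in one stroke by testing the strict dissipativity inequality~\eqref{eq:str_diss} against an arbitrary competing trim. Fix any pair $(\tilde v,\tilde u)\in\mathbb{X}_V\times\mathbb{U}$ with $f(\tilde v,\tilde u)=0$ and $(\tilde v,\tilde u)\neq(\bar v,\bar u)$. By Lemma~\ref{lem:TrimsSimple} this pair generates, for any $q_0$ and any horizon $T$, the constant-velocity trajectory $v(t)\equiv\tilde v$, $u(t)\equiv\tilde u$, $q(t)=q_0+\tilde v\,t$, which is admissible for OCP~\eqref{eq:OCP} under its own boundary data $q(0)=q_0$, $v(0)=\tilde v$, $q(T)=q_0+\tilde v T$, $v(T)=\tilde v$. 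I would then evaluate~\eqref{eq:str_diss} along this trim trajectory.

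Along such a trim the integrand in~\eqref{eq:str_diss} is constant in time, so the supply-rate term collapses to $T\,w(\tilde v,\tilde u)=T\big(\ell(\tilde v,\tilde u)-\ell(\bar v,\bar u)\big)$ and the dissipation term to $T\,\alpha(\|(\tilde v,\tilde u)-(\bar v,\bar u)\|)$. This yields
\[
S(q_T,\tilde v)-S(q_0,\tilde v)\le T\Big(\ell(\tilde v,\tilde u)-\ell(\bar v,\bar u)-\alpha\big(\|(\tilde v,\tilde u)-(\bar v,\bar u)\|\big)\Big).
\]
Here is where the footnote's boundedness hypothesis on $S$ enters decisively: since $0\le S\le M<\infty$ on $\mathbb{X}$, the left-hand side is bounded in absolute value by $M$ uniformly in $T$, even though $q_T=q_0+\tilde v T$ runs off to infinity. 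Dividing by $T$ and letting $T\to\infty$ therefore annihilates the left-hand side and leaves $0\le \ell(\tilde v,\tilde u)-\ell(\bar v,\bar u)-\alpha(\|(\tilde v,\tilde u)-(\bar v,\bar u)\|)$.

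Because $\alpha$ is strictly increasing with $\alpha(0)=0$ and $(\tilde v,\tilde u)\neq(\bar v,\bar u)$ forces $\|(\tilde v,\tilde u)-(\bar v,\bar u)\|>0$, the dissipation term is strictly positive, whence $\ell(\tilde v,\tilde u)>\ell(\bar v,\bar u)$. Since $(\tilde v,\tilde u)$ was an arbitrary feasible competitor and $(\bar v,\bar u)$ is itself feasible for~\eqref{eq:SOPuv}, this shows that $(\bar v,\bar u)$ attains the strict minimum and is therefore the unique minimizer. The main obstacle, and the reason the argument departs from the classical optimal-steady-state result, is precisely the non-stationarity of the configuration: for a genuine steady state the storage difference $S(q_T)-S(q_0)$ vanishes identically and the conclusion is immediate, whereas along a trim $q$ drifts, so one cannot cancel the storage term exactly and must instead control it via the boundedness of $S$ together with the asymptotic $T\to\infty$ scaling. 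A secondary point requiring care is the justification that~\eqref{eq:str_diss} may be invoked along the constant trim trajectory; I would handle this by noting that the trim is an admissible trajectory realizing the competitor's cost, so the dissipation estimate applies with $(\tilde v,\tilde u)$ in place of the optimal pair.
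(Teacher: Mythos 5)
Your argument is, in substance, exactly the paper's proof written out: the paper disposes of this lemma with the single remark that it ``follows directly from \eqref{eq:str_diss} in differential form'', which means evaluating the strict dissipation inequality along a competing trim and observing that $S$ cannot keep decreasing at a constant negative rate; your integral-form version with the division by $T$ and the limit $T\to\infty$ is precisely that argument made explicit, and it correctly delivers $\ell(\tilde v,\tilde u)\geq \ell(\bar v,\bar u)+\alpha\bigl(\|(\tilde v,\tilde u)-(\bar v,\bar u)\|\bigr)>\ell(\bar v,\bar u)$, hence strict global optimality and uniqueness in one stroke. One refinement: the footnote's boundedness of $S$ is not actually needed, and your claim that it ``enters decisively'' overstates the matter. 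Definition~\ref{def:diss} already requires $S\geq 0$, so the left-hand side $S(q_0+\tilde v T,\tilde v)-S(q_0,\tilde v)$ is bounded \emph{below} by $-S(q_0,\tilde v)$ uniformly in $T$, and since the inequality only needs a one-sided bound in that direction, dividing by $T$ and sending $T\to\infty$ works without any upper bound on the storage. You correctly identify the drifting configuration as the point of departure from the classical steady-state argument, but non-negativity, not boundedness, is the property doing the work.

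The one step where your justification is not sound as stated is the applicability of \eqref{eq:str_diss} along the trim. You argue ``the trim is an admissible trajectory realizing the competitor's cost, so the dissipation estimate applies'', but Definition~\ref{def:diss} postulates the inequality only along \emph{optimal} inputs $u^\star(\cdot)$ of OCP~\eqref{eq:OCP}, not along all admissible trajectories; and for large $T$ the competing trim $(\tilde v,\tilde u)\neq(\bar v,\bar u)$ is typically \emph{not} optimal for its own boundary data---that is precisely what the turnpike behaviour predicts. Admissibility does not license the estimate. To be fair, the paper's own one-line proof implicitly makes the same move, so the gap is inherited from the formulation rather than introduced by you; under a classical Willems-type reading of dissipativity (inequality along all admissible trajectories) your step is immediate. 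If one insists on the literal optimal-trajectory definition, the fix is to apply \eqref{eq:str_diss} to the \emph{optimal} solution for the trim's boundary data and bound its cost from above by the trim's cost $T\,\ell(\tilde v,\tilde u)$, which yields $\ell(\tilde v,\tilde u)\geq\ell(\bar v,\bar u)$ by the same $T\to\infty$ scaling; note, however, that this route gives global optimality but not, without further argument, the strictness needed for uniqueness---so either the classical reading or an additional estimate is required to recover the lemma in full.
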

Proof follows directly from \eqref{eq:str_diss} in differential form. 
\begin{prop}[Dissipativity $\Rightarrow$ velocity turnpike]\label{prop:turnpike} Consider OCP \eqref{eq:OCP} and fix $T_0 > 0$. Let $\mathbb{X}_0$ be defined as the set of all initial states $x_0 =(q_0, v_0)^\top \in \mathbb{X}$ such that there exists a control $u = u(x_0) \in L^\infty([0,T_0], \mathbb{U})$ with $v(T_0,v_0,u(\cdot)) = \bar v$. Suppose that 
\begin{itemize}
	\item the considered terminal state $x_T = (q_t, v_T)$ is such that there exist a control $u_T \in L^\infty([0,T_T], \mathbb{U})$  with
\[
v(T_T, \bar v, u_T(\cdot)) = v_T;
\]
\item and let system \eqref{NotationSecondOrderSystem} be strictly dissipative with respect to  $(\bar v, \bar u)^\top $.
\end{itemize}
Then OCP exhibits a velocity turnpike in the sense of Definition \ref{def:turnpike}.\eBox
 \end{prop}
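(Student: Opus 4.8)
The plan is to follow the by-now standard route from strict dissipativity to a measure turnpike (cf.\ \cite{epfl:faulwasser15h, Gruene16a}), adapted to the fact that here the dissipation inequality only penalizes the velocity--input pair $(v,u)$ rather than the full state. First I would evaluate the strict dissipativity inequality \eqref{eq:str_diss} along the optimal trajectory, rearrange it, and exploit non-negativity of the storage $S$ to drop the terminal term $S(q_T,v_T)\ge 0$. Writing the optimal value as $V_T(x_0):=\int_0^T \ell(v^\star,u^\star)\,\mathrm{d}t$, this yields
\[
\int_0^T \alpha\!\left(\left\|(v^\star(t),u^\star(t))-(\bar v,\bar u)\right\|\right)\mathrm{d}t \le V_T(x_0) - T\,\ell(\bar v,\bar u) + S(q_0,v_0),
\]
so the whole argument reduces to bounding the accumulated excess cost $V_T(x_0) - T\,\ell(\bar v,\bar u)$ by a constant independent of~$T$.

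The key step is therefore to produce a cheap feasible comparison control certifying $V_T(x_0)\le T\,\ell(\bar v,\bar u) + \mathrm{const}$. Here I would invoke the two reachability hypotheses of the proposition: the first control steers $(q_0,v_0)$ to the velocity $\bar v$ on $[0,T_0]$ at finite cost $C_0$; the departure control $u_T$ steers $\bar v$ to $v_T$ on the last interval of length $T_T$ at finite cost $C_T$; and in between the system rides the trim at constant velocity $\bar v$ and constant running cost $\ell(\bar v,\bar u)$ for the remaining time $T-T_0-T_T$. Summing the three contributions gives $V_T(x_0)\le C_0 + C_T + (T-T_0-T_T)\,\ell(\bar v,\bar u)$, hence the excess cost is bounded by the $T$-independent constant $C_0 + C_T - (T_0+T_T)\ell(\bar v,\bar u)$.

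Combining the two bounds gives $\int_0^T \alpha(\cdots)\,\mathrm{d}t \le C$ uniformly in $T$. Since $\alpha$ is continuous, strictly increasing and $\alpha(0)=0$, on the set $\Theta_T(\varepsilon)$ the integrand exceeds $\alpha(\varepsilon)>0$, so $\alpha(\varepsilon)\,\mu[\Theta_T(\varepsilon)] \le \int_0^T \alpha(\cdots)\,\mathrm{d}t \le C$, whence $\mu[\Theta_T(\varepsilon)] \le C/\alpha(\varepsilon) =: \nu(\varepsilon) < \infty$, which is exactly \eqref{eq:TP}. To make $\nu$ independent of $x_0$ as Definition~\ref{def:turnpike} demands, I would additionally assume, or verify on $\mathbb{X}_0$, that the approach cost $C_0$ and the storage $S(q_0,v_0)$ are uniformly bounded over $\mathbb{X}_0$ (for instance by compactness of $\mathbb{X}_0$ together with continuity).

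The main obstacle I anticipate is the feasibility of the comparison trajectory with respect to the terminal \emph{position} constraint $q(T)=q_T$: riding the trim fixes the drift direction to $\bar v$, so matching an arbitrary $q_T$ exactly is not automatic from the stated velocity-only reachability assumptions. I expect this to be resolved by the translational symmetry of \eqref{NotationSecondOrderSystem} together with the freedom in the trim dwell time (and possibly a mild compatibility condition on $q_T$): the component of the required displacement $q_T-q_0$ along $\bar v$ can be absorbed by adjusting how long the trajectory stays on the trim, while the transversal components must be supplied by the approach and departure arcs. Making this quantitative while keeping the approach and departure costs bounded is the delicate part of the argument.
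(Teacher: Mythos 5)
Your proposal follows essentially the same route as the paper's proof: evaluate the strict dissipation inequality \eqref{eq:str_diss} along the optimal trajectory, bound the optimal cost from above by a $T$-independent constant via the two reachability hypotheses (approach arc, riding the trim, departure arc), and conclude with the Chebyshev-type estimate $\mu[\Theta_T(\varepsilon)] \le C/\alpha(\varepsilon) =: \nu(\varepsilon)$. The only minor difference is in the storage terms: you keep $S(q_0,v_0)$ and drop $S(q_T,v_T)\ge 0$, whereas the paper bounds both terms by $\hat S = \sup S$, writing $-2\hat S$ on the left-hand side --- which quietly assumes a \emph{bounded} storage (a property mentioned only in a footnote to Definition~\ref{def:diss}) and thereby also settles the uniformity of $\nu$ over $\mathbb{X}_0$ that you instead handle via a compactness/continuity assumption. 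The paper additionally normalizes $\ell(\bar v, \bar u)=0$ without loss of generality, which you reproduce in general form through the $T\,\ell(\bar v,\bar u)$ terms; these are equivalent.

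The obstacle you flag in your final paragraph is genuine, and it is worth noting that the paper's proof does not resolve it either. OCP~\eqref{eq:OCP} imposes the terminal \emph{position} constraint $q(T)=q_T$, while the proposition's hypotheses only guarantee reachability in the velocity component; the published proof simply asserts that ``the reachability assumptions imply'' the bound $\int_0^T \ell(v^\star(t),u^\star(t))\,\mathrm{d}t \le C$, without constructing a comparison trajectory that is feasible for the position boundary condition. Your observation that the trim dwell time can only absorb the component of $q_T - q_0$ along $\bar v$ --- and is of no help at all when $\bar v = 0$, as in the paper's own example in Section~\ref{sec:Example} --- is exactly right; closing this step requires either a strengthened reachability hypothesis on the full state $(q,v)$, or an argument that the approach and departure arcs can be chosen, exploiting the translational symmetry, to realize arbitrary displacements from a bounded set at uniformly bounded cost. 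So your proposal is not missing anything relative to the paper; rather, it makes explicit a feasibility step that the published proof leaves implicit.
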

\begin{proof}
We assume without loss of generality that $\ell(\bar v, \bar u) =0$ and that the horizon is $T \geq T_0 + T_T$. The strict dissipation inequality with bounded storage implies
\begin{equation*} 
-2\hat S +\int_{0}^{T} \alpha\left(\left\|(v(t), u(t))-(\bar v, \bar u)\right\|\right)\,\mathrm{d}t \leq  \int_{0}^{T} \ell(v(t), u(t))\,\mathrm{d}t,
\end{equation*} 
with $\hat S = \sup S(x)$.
The reachability assumptions imply that for any optimal solution the performance can be bounded from above by
\[
\int_{0}^{T} \ell(v^\star(t), u^\star(t))\,\mathrm{d}t \leq C.
\]
Moreover, we split the time horizon $[0,T]$ into $\Theta_\varepsilon$ and $[0,T]\setminus \Theta_\varepsilon$ and
have the following bound
\[
\mu[\Theta_\varepsilon]\alpha(\varepsilon)=\int_{\Theta_\varepsilon}\alpha\left(\varepsilon\right)\,\mathrm{d}t \leq \int_{0}^{T} \alpha\left(\left\|(v(t), u(t))-(\bar v, \bar u)\right\|\right)\,\mathrm{d}t.
\]
Combining the last three inequalities yields
\[
\mu[\Theta_\varepsilon] \leq \dfrac{2\hat S + C}{\alpha(\varepsilon)}.
\]
\end{proof}


\section{Relation of Optimality Conditions} \label{sec:Adjoints}

In this section we compare optimality conditions of the OCP~\eqref{eq:OCP} and the velocity steady state optimization problem \eqref{eq:SOPuv}. 
%
%
These derivations need to assume that there are no state or input constraints, respectively,and that the optimal trim is characterized by an interior point of the velocity and input constraints. 
First we derive the necessary optimality conditions for the OCP~\eqref{eq:OCP} based on Pontryagin's maximum principle (PMP) which yields the adjoint equations
\begin{subequations}
\begin{align}
\dot{\lambda}_q & = 0\label{eq:adjoint1}\\
\dot{\lambda}_v & = -\frac{\partial l}{\partial v} - \lambda_q - \frac{\partial f}{\partial v}^T \, \lambda_v\label{eq:adjoint2}
\end{align}
and the optimality condition
\begin{align}
0 & =  \frac{\partial l}{\partial u} + \frac{\partial f}{\partial u}^T \, \lambda_v\label{eq:optimal}
\end{align}
\end{subequations}
for time varying adjoint variables $\lambda_q,\lambda_v$. 
The scalar-valued multiplier for the cost function has been set to one w.l.o.g.\ since this multiplier being zero requires all other multipliers to be zero, too --a case which is excluded in the PMP.
Necessary optimality conditions for the velocity steady state optimization problem \eqref{eq:SOPuv} are
\begin{subequations}
\begin{align}
0 & = f(v,u)\\
0 & = \frac{\partial l}{\partial v} + \frac{\partial f}{\partial v}^T \, \lambda\label{eq:adjoint_ss}\\
0 & = \frac{\partial l}{\partial u}  + \frac{\partial f}{\partial u}^T \, \lambda\label{eq:adjoint_ssu}
\end{align}
\end{subequations}
with constant Lagrange multiplier $\lambda$.
Comparing both sets of necessary optimality conditions we can derive conditions on the adjoints under which solutions are the same for both problems (cf.\ Figure~\ref{fig:commutediagram}).
\begin{prop}\label{prop:adjoint}
If there exist time intervals $[t_1,t_2]$ with $0\leq t_1< t_2 \leq T$ on which $\lambda_v \equiv \lambda =$ constant and if $\lambda_q(t_i)=0$ for $t_i \in [0,T]$, necessary conditions of the optimal control problem \eqref{eq:OCP} on the time interval $[t_1,t_2]$ reduce to optimality conditions of the velocity steady state optimization problem \eqref{eq:SOPuv}.\eBox
\end{prop}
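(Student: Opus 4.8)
The plan is to proceed by direct substitution, exploiting that the two hypotheses collapse the time-varying PMP system into the algebraic stationarity conditions of the steady-state problem. First I would record the consequence of \eqref{eq:adjoint1}: since $\dot\lambda_q = 0$, the adjoint $\lambda_q$ is constant on all of $[0,T]$. The hypothesis $\lambda_q(t_i)=0$ at some $t_i\in[0,T]$ therefore forces $\lambda_q \equiv 0$ on the whole horizon, and in particular on the subinterval $[t_1,t_2]$.

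Next I would turn to the velocity adjoint. On $[t_1,t_2]$ we are given $\lambda_v \equiv \lambda = \text{const.}$, so $\dot\lambda_v = 0$ there. Substituting $\dot\lambda_v = 0$, $\lambda_q = 0$, and $\lambda_v = \lambda$ into \eqref{eq:adjoint2} and multiplying by $-1$ yields $0 = \frac{\partial l}{\partial v} + \frac{\partial f}{\partial v}^T \lambda$, which is precisely \eqref{eq:adjoint_ss}. Likewise, inserting $\lambda_v = \lambda$ into the optimality condition \eqref{eq:optimal} reproduces \eqref{eq:adjoint_ssu} verbatim. Identifying the constant Lagrange multiplier $\lambda$ of the steady-state problem with the (now constant) costate $\lambda_v$, the two dual stationarity conditions thus coincide. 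This part is pure algebra and carries no real difficulty.

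The remaining and, I expect, most delicate step is to recover the primal feasibility condition $f(v,u)=0$ of \eqref{eq:SOPuv}, which is needed for the reduction of \emph{all} necessary conditions. I would argue that on $[t_1,t_2]$ the identities \eqref{eq:adjoint_ss}--\eqref{eq:adjoint_ssu} hold pointwise with a \emph{fixed} parameter $\lambda$; under the standing interior-point assumption together with a nondegeneracy condition on the Jacobian of these identities with respect to $(v,u)$, the implicit function theorem pins down $(v(t),u(t))$ as a constant pair $(\bar v,\bar u)$ on the interval. Constancy of $v$ then gives $\dot v \equiv 0$, and the dynamics $\dot v = f(v,u)$ of \eqref{NotationSecondOrderSystem} immediately yield $f(\bar v,\bar u)=0$, i.e.\ the trim condition of Lemma~\ref{lem:TrimsSimple} and the feasibility constraint in \eqref{eq:SOPuv}.

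I would flag that this last step is where the genuine content lies: translating the constancy of the costate into the primal trim condition requires the regularity hypothesis noted above. Alternatively, one may invoke the a~priori knowledge that the turnpike trajectory on $[t_1,t_2]$ is itself a trim, in which case $f(\bar v,\bar u)=0$ holds by definition and the argument closes without the implicit function theorem. Either way, on $[t_1,t_2]$ all three necessary conditions of OCP~\eqref{eq:OCP} reduce to those of the velocity steady-state problem~\eqref{eq:SOPuv}, as claimed.
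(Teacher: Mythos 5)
Your proof is correct, and on the dual side it coincides exactly with the paper's two-sentence argument: constancy of $\lambda_q$ from \eqref{eq:adjoint1} together with $\lambda_q(t_i)=0$ gives $\lambda_q\equiv 0$ on $[0,T]$, and $\dot\lambda_v=0$ on $[t_1,t_2]$ then collapses \eqref{eq:adjoint2} and \eqref{eq:optimal} into \eqref{eq:adjoint_ss} and \eqref{eq:adjoint_ssu} with $\lambda:=\lambda_v$. Where you depart from the paper is your third paragraph: the published proof stops at this dual equivalence and never addresses the primal constraint $0=f(v,u)$ of \eqref{eq:SOPuv}; as the commutation diagram of Figure~\ref{fig:commutediagram} suggests, the authors treat the velocity steady-state condition as something \emph{imposed}, so the proposition's content is that the two reductions commute on the adjoint/stationarity equations. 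You instead \emph{derive} the primal condition: since \eqref{eq:adjoint_ss}--\eqref{eq:adjoint_ssu} hold pointwise in $t$ with a fixed multiplier $\lambda$, a nondegenerate Jacobian of this algebraic system with respect to $(v,u)$ makes its solutions locally isolated, forcing $(v(t),u(t))\equiv(\bar v,\bar u)$ along the arc, whence $\dot v=0$ and $f(\bar v,\bar u)=0$ via the dynamics \eqref{NotationSecondOrderSystem}. This buys a genuinely stronger conclusion---the trim condition comes out rather than going in---at the price of a regularity hypothesis that is not in the proposition, plus one small point you should make explicit: the isolation argument needs $(v(\cdot),u(\cdot))$ continuous on $[t_1,t_2]$; $v$ is absolutely continuous, but continuity of $u^\star$ must be extracted from \eqref{eq:optimal} (e.g.\ via strict convexity of $\ell$ in $u$, so that the Hamiltonian has a unique minimizer depending continuously on $(v,\lambda_v)$), since a merely measurable $u$ could in principle jump between isolated roots. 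Your fallback---invoking a~priori that the arc is a trim---is precisely the paper's implicit stance, so either branch of your argument is consistent with, and slightly more careful than, the published proof.
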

\begin{proof}
The condition $\lambda_v \equiv \lambda =$ constant ensures equivalence of the optimality conditions \eqref{eq:optimal} and \eqref{eq:adjoint_ssu} on $[t_1,t_2]$. $\lambda_q(t_i)=0$ for $t_i \in [0,T]$ together with \eqref{eq:adjoint1} yields $\lambda_q\equiv 0$ on $[0,T]$ such that \eqref{eq:adjoint2} reduces to \eqref{eq:adjoint_ss}.
\end{proof}
%
%
%

\begin{figure}
\centering
\hspace{-1cm}\includegraphics[height =.6\textwidth]{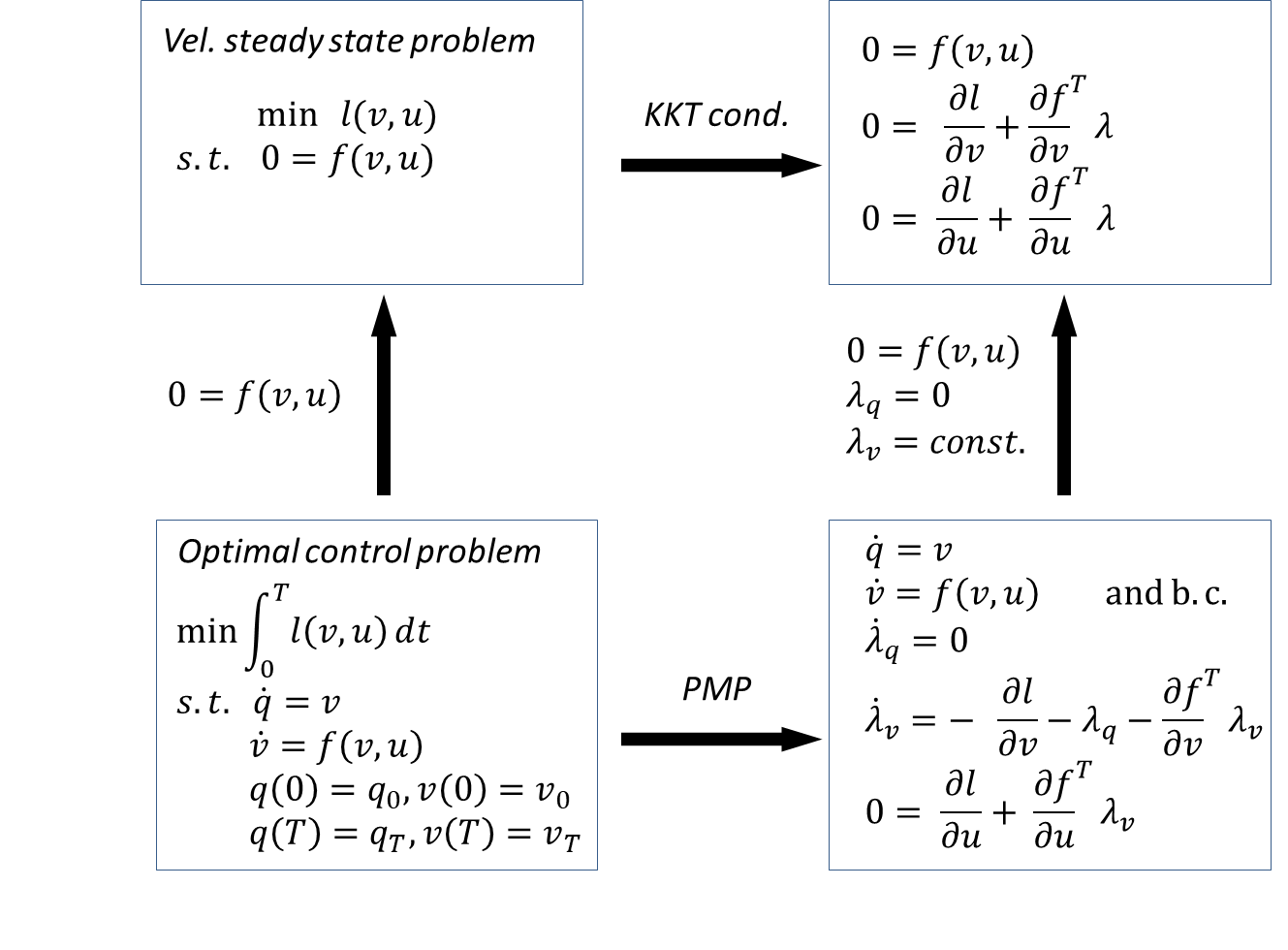}
\caption{Derivation of necessary optimality conditions (by Pontryagin's maximum principle (PMP) and Karush-Kuhn-Tucker (KKT) conditions) and imposing velocity steady state conditions commute.}
\label{fig:commutediagram}
\end{figure}

\begin{rem}

As a consequence of Proposition~\ref{prop:adjoint}, we see that on time intervals, where the dual parts of the optimality system coincide, then on these time intervals the optimal solutions will be at the velocity turnpike, which is specified by the optimal trim. 
In light of Proposition \ref{prop:turnpike}, if---for specific primal boundary conditions and provided the horizon is sufficiently long---such time intervals do not exist, then the optimal solutions still have to be close to the optimal trim solution of the steady state problem. 
Moreover, for regular optimal control problems, one expects that for general boundary conditions, which do not coincide with the turnpike,  the optimal solutions approach a neighborhood of the turnpike without reaching it exactly, see \cite{kit:faulwasser17a,kit:zanon18a} for the analysis of exact and non-exact steady-state turnpikes. Though a detailed analysis of exact variants of velocity turnpikes is beyond the scope of the present paper.\eBox

\end{rem} 
 

\section{Illustrative Example}\label{sec:Example}

We consider the second-order system $\ddot{x}(t) = u(t)$ written as a first-order ODE, i.e.
\begin{equation}\label{MotivationalExampleFirstOrderODE}
	\frac {\mathrm{d}}{\mathrm{d}t} \begin{pmatrix}
		q(t) \\ v(t)
	\end{pmatrix} = \begin{pmatrix}
		0 & 1 \\ 0 & 0
	\end{pmatrix} \begin{pmatrix}
		q(t) \\ v(t)
	\end{pmatrix} + \begin{pmatrix}
		0 \\ 1
	\end{pmatrix} u(t).
\end{equation}
Using the stage cost $\ell(v,u) := \frac 12 (\Vert v \Vert^2 $ $+ \Vert u \Vert^2)$ and imposing the boundary conditions
\begin{equation}\label{MotivationalExampleBoundaryConditions}
	\begin{pmatrix}
		q(0) \\ v(0)
	\end{pmatrix} = \begin{pmatrix}
		q_0 \\ v_0
	\end{pmatrix} \quad\text{ and }\quad \begin{pmatrix}
		q(T) \\ v(T)
	\end{pmatrix} = \begin{pmatrix}
		q_T \\ v_T
	\end{pmatrix},
\end{equation}
we get the OCP
\begin{align} 
	\underset{u \in \mathcal{L}^1([0,T],\mathbb{R})}{\text{minimize}}\quad & %
	\int_{0}^{T} \frac 12 \left( \Vert v(t) \Vert^{2} + \Vert u(t) \Vert^{2} \right) \mathrm{d}t \label{eq:example_OCP} \\
	\text{subject to}\quad & \eqref{MotivationalExampleFirstOrderODE} \text{ for almost all $t \in [0,T]$ and \eqref{MotivationalExampleBoundaryConditions}}. \nonumber
\end{align}

Since the system is invariant w.r.t.\ translations in $q$, %
any triple $(q,v,u)$ with $(q,0,0)$ is a velocity steady state satisfying $\ell(v,u) = 0$. %
Hence, the system is optimally operated at all of these steady states.

\begin{theorem}\label{ThmMotivationalExample}%
	For each optimization horizon~$T > 0$, %
	the optimal control problem~\eqref{eq:example_OCP} has a unique optimal solution $(q^\star, v^\star, u^\star): [0,T] \rightarrow \mathbb{R}^3$. %
	Moreover, the OCP~\eqref{eq:example_OCP} exhibits a hyperbolic velocity turnpike with respect to $(\bar v, \bar u) = (0,0)$, %
	i.e.\ for each bounded set~$K \subset \mathbb{R}^4$, there exists a positive constants $C$, $\bar\nu > 0$ such that, %
	for all initial conditions $(q_0\ q_T\ v_0\ v_T)^\top \in K$ and all $T > 0$, 
	we have
	\begin{equation}\label{eq:vTP}
		\left\| (v^\star(t)\ u^\star(t))^\top - (\bar{v}\ \bar{u})^\top \right\| \leq C / T
	\end{equation}
	for all $t \in [\bar\nu,T-\bar\nu]$. Furthermore, Inequality~\eqref{eq:vTP} also holds, %
	if the left hand side is replaced by $| \lambda_v(t) |$ where $\| (\lambda_q\ \lambda_v)^\top \|$ denotes the adjoint variables.\eBox
\end{theorem}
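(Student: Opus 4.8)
The plan is to exploit that \eqref{eq:example_OCP} is a strictly convex linear–quadratic problem and to solve the resulting two-point boundary value problem in closed form. First I would settle existence and uniqueness: since $v(0)=v_0$ is fixed, the velocity $v(t)=v_0+\int_0^t u(s)\,\mathrm{d}s$ is an affine function of $u$, so the cost is a quadratic functional of $u$ that is coercive and strictly convex on $L^2([0,T],\mathbb{R})$ thanks to the $\int u^2$ term. The double integrator \eqref{MotivationalExampleFirstOrderODE} is controllable, hence the admissible set (controls steering $(q_0,v_0)$ to $(q_T,v_T)$) is a nonempty closed affine subspace for every $T>0$; minimizing a coercive, strictly convex functional over it yields a unique minimizer $(q^\star,v^\star,u^\star)$, which is moreover smooth.

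Next I would write down the optimality system via the PMP specialized to \eqref{MotivationalExampleFirstOrderODE}. With Hamiltonian $H=\tfrac12(v^2+u^2)+\lambda_q v+\lambda_v u$ one gets $\dot\lambda_q=0$, so $\lambda_q\equiv\mu$ is constant, $u^\star=-\lambda_v$, and the reduced $(v,\lambda_v)$-dynamics
\begin{equation*}
\frac{\mathrm{d}}{\mathrm{d}t}\begin{pmatrix}v\\\lambda_v\end{pmatrix}=\begin{pmatrix}0&-1\\-1&0\end{pmatrix}\begin{pmatrix}v\\\lambda_v\end{pmatrix}+\begin{pmatrix}0\\-\mu\end{pmatrix}.
\end{equation*}
The system matrix has eigenvalues $\pm1$; this hyperbolicity is exactly the structural source of the claimed hyperbolic turnpike. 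Its general solution is
\begin{equation*}
v(t)=-\mu+Ae^{t}+Be^{-t},\qquad \lambda_v(t)=-Ae^{t}+Be^{-t},\qquad u^\star(t)=-\lambda_v(t)=Ae^{t}-Be^{-t},
\end{equation*}
together with $q(t)=q_0-\mu t+A(e^{t}-1)+B(1-e^{-t})$. Note already that $|u^\star(t)|=|\lambda_v(t)|$, which will make the adjoint estimate identical to the one for $u^\star$.

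I would then pin down $A,B,\mu$ from the three scalar conditions $v(0)=v_0$, $v(T)=v_T$, $q(T)=q_T$. Solving the first two for $A,B$ in terms of $\mu$ and substituting into the third, using $\tfrac{\cosh T-1}{\sinh T}=\tanh(T/2)$, gives
\begin{equation*}
\mu=\frac{q_T-q_0-(v_0+v_T)\tanh(T/2)}{2\tanh(T/2)-T},\qquad A=\frac{(v_T+\mu)-(v_0+\mu)e^{-T}}{2\sinh T},\qquad B=\frac{(v_0+\mu)e^{T}-(v_T+\mu)}{2\sinh T}.
\end{equation*}
The decisive estimate is that, uniformly for $(q_0,q_T,v_0,v_T)\in K$, the constant adjoint satisfies $\lambda_q=\mu=O(1/T)$ (the denominator grows like $-T$ while the numerator stays bounded on $K$), while $A=O(e^{-T})$ and $B=O(1)$. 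Feeding these back yields the two-scale bounds $|u^\star(t)|=|\lambda_v(t)|\le C\bigl(e^{-t}+e^{-(T-t)}\bigr)$ and $|v^\star(t)|\le|\mu|+C\bigl(e^{-t}+e^{-(T-t)}\bigr)\le C'/T+C\bigl(e^{-t}+e^{-(T-t)}\bigr)$, from which \eqref{eq:vTP} is read off on the interior $[\bar\nu,T-\bar\nu]$.

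I expect the main obstacle to be reconciling the two scales in the interior estimate. The exponential boundary layers contribute $e^{-t}+e^{-(T-t)}$, which on $[\bar\nu,T-\bar\nu]$ is only $O(e^{-\bar\nu})$ — a constant for fixed $\bar\nu$; to absorb it into a genuine $C/T$ bound one must take $\bar\nu$ of order $\log T$ (equivalently, state the estimate in the form $C(e^{-t}+e^{-(T-t)})+C'/T$ and read off \eqref{eq:vTP} in the deep interior, where the algebraic $O(1/T)$ drift of $v^\star$ governs while $u^\star,\lambda_v$ are exponentially small). The secondary point is the \emph{uniformity} of the bounds on $\mu,A,B$ over $K$: these follow from the explicit formulas with constants depending only on $\sup_K\|\cdot\|$ and not on the individual data or on $T$, and are otherwise routine linear-quadratic calculations.
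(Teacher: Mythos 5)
Your proposal is correct, and it reaches exactly the paper's formulas by a genuinely different route. The paper computes the full $4\times4$ matrix exponential $e^{At}$ of the state--adjoint system, then decomposes $v^\star$ into the zero-boundary-velocity solution from the earlier companion result plus correction terms: a ``turnpike'' factor shown to be uniformly bounded by $\tilde c/T$ with $\tilde c = 3/2$, the boundary-layer term \eqref{eq:exampleIncomingLeavingArc}, and a residual combining into \eqref{eq:exampleFactor}; the adjoint bound for $\lambda_q$ is then handled by a separate series expansion of $T\sinh(T)$. You instead eliminate $(q,\lambda_q)$ immediately, diagonalize the reduced $(v,\lambda_v)$ saddle dynamics with eigenvalues $\pm 1$, and solve for the three scalars $A$, $B$, $\mu$ in closed form. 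Your $\mu$ is precisely the paper's $\lambda_q(0)$ (multiply your numerator and denominator by $\sinh(T)$ and use $\tanh(T/2) = (\cosh T - 1)/\sinh T$), and your two-scale estimate $|v^\star(t)| \le C'/T + C(e^{-t} + e^{-(T-t)})$, $|u^\star(t)| = |\lambda_v(t)| \le C(e^{-t}+e^{-(T-t)})$ is the same conclusion the paper extracts term by term. What your route buys: it is self-contained (no appeal to Proposition~8 of the earlier work, and you supply the existence/uniqueness argument directly via coercive strict convexity over the closed affine admissible set, rather than citing it), it makes the adjective ``hyperbolic'' structurally transparent via the saddle eigenvalues, and it obtains $\lambda_q = \mu = O(1/T)$ uniformly on $K$ directly from the closed-form quotient, avoiding the paper's series-expansion detour. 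What the paper's route buys is the explicit constant $3/2$ and continuity with its predecessor result. One harmless discrepancy: the paper formally poses \eqref{eq:example_OCP} over $\mathcal{L}^1$, while your convexity argument lives in $L^2$; finite cost forces $u \in L^2$, so nothing is lost.

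Your closing observation about the two scales is not a weakness of your proof but an accurate diagnosis of a gap in the theorem statement itself: with a \emph{fixed} $\bar\nu$, the boundary-layer contribution at $t = \bar\nu$ is of size $e^{-\bar\nu}|v_0|$ up to $O(1/T)$ terms, which cannot be dominated by $C/T$ as $T \to \infty$ whenever $v_0 \neq 0$ (and symmetrically at $t = T - \bar\nu$ for $v_T \neq 0$). So \eqref{eq:vTP} as literally stated requires $\bar\nu$ of order $\log T$, or the two-term bound you propose; the paper's proof only gestures at this (``an appropriately chosen constant $\bar\nu$'') without resolving it. Your explicit fix is the correct one.
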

\begin{proof}
	Existence and uniqueness of an optimal solution can be shown analogously to \cite{kit:faulwasser19b_2}; note that the stage cost has not changed. %
	First, let us briefly recap some of the findings: Applying Pontryagin's Maximum Principle based on the Hamiltonian (OCP~\eqref{eq:example_OCP} is normal)
	\begin{equation*}
		\mathcal{H}(q,v,\lambda,u) = \frac{1}{2} \Big( v^{2} + u^{2} \Big) + \lambda_q v + \lambda_v u.
	\end{equation*}	
	yields the necessary optimality condition
	\begin{equation}
		u^\star(t) = -\lambda_v(t) \qquad\text{for almost all $t \in [0,T]$.} \label{eq:exampeLambdaV}
	\end{equation}
	Moreover, the solution of the state-adjoint system is given by %
	$(q^\star(t)\ v^\star(t)\ \lambda_q(t)\ \lambda_v(t))^\top = e^{At} (q_0\ v_0\ \lambda_q(0)\ \lambda_v(0)^\top$ with
	\begin{eqnarray}\nonumber
		e^{At} & = & \left( \begin{array}{rrrr}
			1 & \sinh(t) & \sinh(t)-t & 1 - \cosh(t) \\
			0 & \cosh(t) & \cosh(t)-1 & -\sinh(t) \\
			0 & 0 & 1 & 0 \\
			0 & -\sinh(t) & -\sinh(t) & \cosh(t)
		\end{array} \right)
	\end{eqnarray}
	where we used the functions $\cosh(t) = \nicefrac{1}{2} (e^t+e^{-t})$ and $\sinh(t) = \nicefrac{1}{2} (e^t-e^{-t})$ to simplify the resulting expression. %
	The initial value of the adjoint are given by
	\begin{align*}
		\lambda_q(0) & = \frac{\sinh(T)(q_T-q_0)+(1-\cosh(T)) (v_0+v_T)}{2 (\cosh(T)-1) - T \sinh(T)}, \\
		\lambda_v(0) & = \frac {\cosh(T) v_0 - v_T + (\cosh(T)-1) \lambda_q(0)}{\sinh(T)}.
	\end{align*}
	Note that $\lambda_q(t) = \lambda_q(0)$ holds for all $t \in [0,T]$ (in particular, $\lambda_q(T) = \lambda_q(0)$ holds). 

	In \cite[Proposition~8]{kit:faulwasser19b_2} it was shown for $v_0 = v_T = 0$, that the optimal velocity trajectory is given by
	\begin{align}
		v^\star(t) & = \left(\frac {\sinh(t) + \sinh(T-t) - \sinh(T)}{2(\cosh(T)-1) - T\sinh(T)}\right) \, (q_T-q_0). \label{eq:exampleZeroVelocity}
	\end{align}
	In the general case considered here, $v^\star(t)$ consists of the sum of its counterpart for $v_0 = v_T = 0$, %
	i.e.\ the right hand side of~\eqref{eq:exampleZeroVelocity}, the term
	\begin{equation}
		\frac {\sinh(T-t) v_0 + \sinh(t) v_T}{\sinh(T)}, \label{eq:exampleIncomingLeavingArc}
	\end{equation}
	which represents the (exponentially) decreasing influence of the initial velocity and the (exponentially) increasing impact of the terminal velocity, and
	\begin{align}
		& \frac {(1-\cosh(T)) (\sinh(T-t)-\sinh(T)+\sinh(t)) (v_0+v_T) }{\sinh(T) \cdot (2 (\cosh(T)-1)-T \sinh(T))}. \nonumber
	\end{align}
	Combining the last expression with the right hand side of~\eqref{eq:exampleZeroVelocity} yields the term
	\begin{equation}
		\left[ \left( \frac {1-\cosh(T)}{\sinh(T)} \right) (v_0+v_T) + (q_T-q_0) \right] \label{eq:exampleFactor}
	\end{equation}		
	multiplied with the factor
	\begin{align}
		\frac {\sinh(T-t)-\sinh(T)+\sinh(t)}{2 (\cosh(T)-1)-T \sinh(T)}. \nonumber
	\end{align}
	Then, following the same line of reasoning as presented in~\cite{kit:faulwasser19b_2} yields %
	that this factor is uniformly bounded by $\tilde{c}/T$ with constant $\tilde{c} := 3/2$ on the time interval $[0,T]$.
	Since the factor $(1-\cosh(T)) / \sinh(T)$ is monotonically increasing in the optimization horizon~$T$ %
	with being equal to zero for $T = 0$ and converging to one for $T \rightarrow \infty$, %
	these two summands are uniformly bounded by $c/T$ with $c := \tilde{c} (|v_0+v_T|+|q_T-q_0|)$. %
	
	Since we may rewrite the quotient $\sinh(T/2) / \sinh(T)$ as $(2 \cosh(T/2))^{-1}$, %
	the third summand~\eqref{eq:exampleIncomingLeavingArc} in the representation of~$v^\star(t)$,
	which essentially represents the incoming and the arrival arc, is exponentially decaying with increasing distance to the boundaries. %

	In conclusion, the optimal solutions $(q^\star,v^\star,u^\star)$ exhibit an \textit{hyperbolic velocity turnpike} w.r.t.\ $(\bar{v},\bar{u}) = (0,0)$. %
	Here, the constant~$C$ in Inequality~\eqref{eq:vTP} can then be chosen analogously to \cite{kit:faulwasser19b_2} %
	with a slight correction in order to account for the additional summand representing the influence of the incoming and leaving arc. %
	This term also necessitates the restriction of the time domain using an appropriately chosen constant~$\bar\nu$. %
	The additional assertion w.r.t.\ the adjoint variable~$\lambda_v$ directly follows from Equation~\eqref{eq:exampeLambdaV}. %
	Then, using the definition of $\cosh(T)$ and $\sinh(T)$ yields
	\begin{equation}
		T | \lambda_q(t) | = T | \lambda_q(0) | = \frac{T \sinh(T)}{(T-2) \sinh(T) + 2 (1-e^{-T})} | \eqref{eq:exampleFactor} |. \nonumber
	\end{equation}
	Then, using a series expansion analogously to~\cite{kit:faulwasser19b_2} for the fraction yields a term %
	which is uniformly upper bounded by $3/2$ if the first two summands $T^2 + T^4/6$ of the series expansion for $T \sinh(T)$ are neglected. But this summand, i.e.\ 
	\begin{equation}
		\frac {T^2 + T^4/6}{\sum_{k=2}^\infty \frac {T^{2k}}{(2k-1)!} (1-\frac 1k) },	\nonumber
	\end{equation}
	is rapidly decaying to zero for sufficiently large~$T$, which shows the assertion for appropriately chosen~$\bar\nu$.
\end{proof}
	
	Theorem~\ref{ThmMotivationalExample} extends \cite[Proposition~8]{kit:faulwasser19b_2} to non-zero initial and terminal velocity and %
	explains the respective incoming and leaving arcs. Moreover, it also covers the behaviour of the adjoint variables.

\begin{rem}[Relation to the velocity turnpike property] %
	Let $\nu: [0,\infty) \rightarrow [0,\infty]$ be defined by
	\begin{equation}
		\varepsilon \mapsto \begin{cases}
			\infty & \varepsilon = 0 \\
			\max\{2 \bar{\nu}, C/\varepsilon\} & \varepsilon > 0
		\end{cases} \nonumber
	\end{equation}
	with $\bar\nu$ and $C$ from Theorem~\ref{ThmMotivationalExample}.	Then, $(\bar{v},\bar{u})^\top$ satisfies Definition~\ref{def:turnpike} since %
	either the horizon length~$T$ is \textit{sufficiently long}, i.e.\ $C/T \leq \varepsilon$ holds. %
	Then, only the incoming and the leaving arc may violate the desired inequality resulting in $2 \nu$. %
	Otherwise, the horizon~$T$ is smaller than $C/\varepsilon$ such that the inequality trivially holds. %
	In conclusion, this reasoning shows that a bound like the one derived in Theorem~\ref{ThmMotivationalExample} always implies the (measure-based) velocity turnpike property.\eBox
\end{rem}
Optimal solutions for an example scenario, namely $x_0 = 0.0$, $x_T = 5.0$, $v_0 = 3.0$, $v_T = 6.0$, are shown in Figure~\ref{fig:motivational Example}. We give the two state and two adjoint variables for $T \in \{5,10,15,20,25,30,35\}$.
\begin{figure}
	\begin{center}
		\includegraphics[width=.75\textwidth]{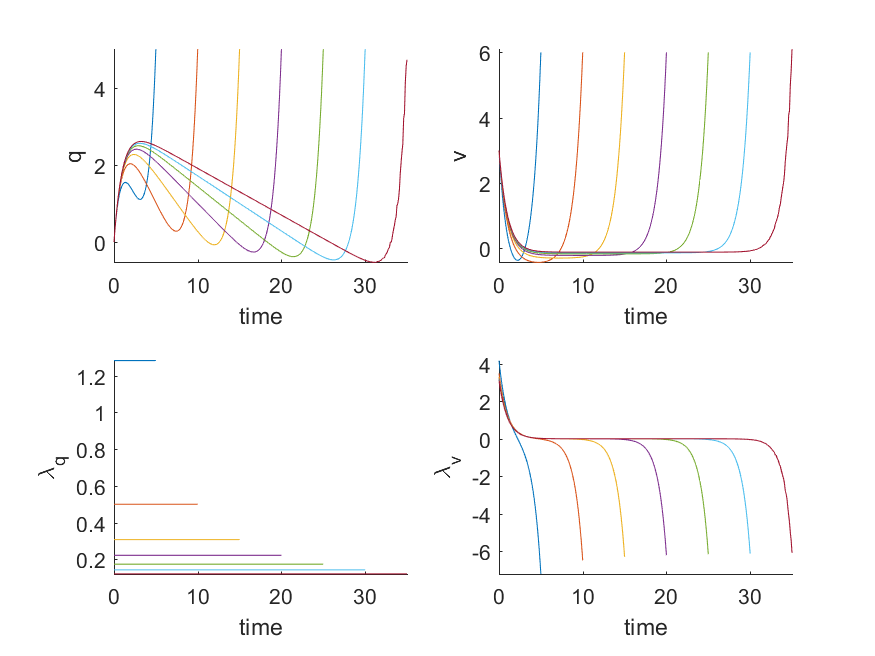}
		\caption{Numerical solution of the illustrative example  for $T = 20$.}
		\label{fig:motivational Example}
	\end{center}
\end{figure}
Here, the optimal solution has the predicted turnpike property at $\bar v = \bar u =0$ with zero control and %
thus constant velocity and linear decrease of configuration.
The incoming and leaving arc ensure that the boundary conditions on the configuration and velocity components are met.

\section{Conclusions} \label{sec:Conclusion}
This paper has investigated the relation between dissipativity properties of OCPs and velocity turnpikes. We extended our previous results from  \cite{kit:faulwasser19b_2} by adding a sufficient condition based on dissipativity and by making explicit the link between optimal trim solutions, which correspond to velocity steady states, and the turnpike. 
To this end, we considered a special type of symmetry, namely the invariance of the dynamics w.r.t. to the full configuration vector $q$. This simplifies the characterization of trims to defining tuples $(v,u)$, i.e.\ trims are defined by their constant velocity (and $u$ is chosen to satisfy $f(v,u)=0$).
Future work will explore more general symmetry properties and converse turnpike results. 

\bibliographystyle{abbrv}  
\footnotesize
\bibliography{ReferencesTurnpikeMechSystem}             
                                                   







\end{document}